\newtheorem {thm}     {Theorem}
\newtheorem {rmk}  {Remark}
\newtheorem{prop}{Proposition}
\newcommand{\mat}{\mathbf{M}}
\newcommand{\eps}{\varepsilon}
\newcommand{\ini}{{\mathrm{I}}}
\newcommand{\eliminado}[1]{}
\def\R{\mathbb{R}}
\newcommand{\bm}{\mathbf{\mathcal{BM}^+}([0,1])}
\newcommand{\lbm}{L^{\infty}\left([0,\infty);\bm\right)}
\newcommand{\cicic}{C^{\infty}\left(\rb^+;C^{\infty}([0,1])\right)}
\newcommand{\rb}{\mathbb{R}}
\def\d{{\mathrm{d}}}
\def\O{{\mathcal{O}}}
\newcommand{\A}{\mathbb{A}}
\newcommand{\B}{\mathbb{B}}
\def\bydef{:=}
\begin{document}

\begin{frontmatter}

\title{From discrete to continuous evolution models: a unifying approach to drift-diffusion and replicator dynamics}

\author{Fabio A. C. C. Chalub}
\address{Departamento de Matem\'atica and Centro de Matem\'atica
e Aplica\c c\~oes, Universidade Nova de Lisboa, 
Quinta da Torre, 2829-516, Caparica, Portugal.}
\ead{chalub@fct.unl.pt}
\author{Max O. Souza}
\address{Departamento de Matem\'atica Aplicada, Universidade Federal
Fluminense, R. M\'ario Santos Braga, s/n, 22240-920, Niter\'oi, RJ, Brasil.}
\ead{msouza@mat.uff.br}

\begin{abstract}
We study the large population limit of the Moran process, assuming weak-selection,  and for different scalings. Depending on the particular choice of scalings, we obtain a continuous model that may highlight the genetic-drift (neutral evolution) or natural selection; for one precise scaling, both effects are present. For the scalings that take the genetic-drift into account, the continuous model is given by a singular diffusion equation, together with two conservation laws that are already present at the discrete level. For scalings that take into account only natural selection, we obtain a hyperbolic singular equation that embeds the Replicator Dynamics and satisfies only one conservation law. The derivation is made in two steps: a formal one, where the candidate limit model is obtained, and a rigorous one, where convergence of the probability density is proved. Additional results on the fixation probabilities are also presented.
\end{abstract}

\begin{keyword}
Moran process, replicator dynamics, Kimura equations, drift-diffusion equations. 
\end{keyword}

\end{frontmatter}


\section{Introduction}

\subsection{Background}

An important effort in contemporary Evolutionary Biology is to understand the plethora of models used to describe the time evolution of a given population, here  to be understood on a broad sense. Different models were built to model different aspects of evolution. Population genetics models, aiming at modeling genetic variability, had a natural start with discrete stochastic models at the individual level as, for instance, the Moran and Wright-Fisher models \cite{Ewens}. On the other hand, ecological and epidemiological models assume focus on macroscopic features of the population as is the case of the SIR model and the Lotka-Volterra  system~\cite{Murray_vol1}. This is also the case of the replicator equation~\cite{HofbauerSigmund} and adaptive dynamics~\cite{WaxmanGavrilets_JEB2005,ChampagnatFerriereBenArous_Selection2001}. Mean field approximation were also used, from the onset, in population genetics, as in the Kimura equation \cite{Kimura,Ewens}. Alternative modelings efforts include evolutionary models through cellular automata~\cite{Edelstein_JTB1993}, graph theory~\cite{OhtsukiNowak_JTB2006}, among others. More recently, stochastic effects were also included at the continuous level, for instance in the replicator equation~\cite{BenaimHofbauer_JBD2008,Imhof_AAP2005}. Evolution of structured populations has led to the study of integro-differential equations~\cite{ChampagnatMeleard_JMB2007,Perthame_Transport}. A review on continuous mutation-selection models can be found in~\cite{Burger_Genetica1998}. Models direct at the study of fixation probabilities can be found in~\cite{Lambert2006,Lessard2005,Lessard2007,Proulx2000}.

The importance of unifying evolutionary models has been already observed in~\cite{PageNowak2002}. The difference between evolutionary game dynamics, which can lead to non-homogeneous distribution~\cite{HofbauerSigmund,JMS} in the long time, and the diffusive models that usually led to pure fixation has been also pointed out by~\cite{ChalubSouza06,ChalubSouza:2008}. A detailed study between the relationship between discrete and continuous models was also carried out in~\cite{ChampagnatFerrierMeleard_TPE2006}.

\subsection{Modeling and techniques}

The focus of this work is to understand the large population limit of discrete processes rigorously and to show the intimate connection between apparently diverse models.

\subsubsection{The discrete process}

We shall use the Moran process as a paradigm~\cite{Moran}. Thus, we consider a population of two types---usually called the mutant and the resident---having a fixed number $N$ of individuals.
At a given time, an individual is chosen to reproduce with probability proportional to a certain function termed the \textit{fitness}, which we shall denote by $\Psi$. The function $\Psi$ is usually taken as function of the frequency of a particular type in the population, since this accounts for most of the biological relevant models, and it is convenient from a mathematical standpoint. The individual then replaces a second individual (possibly the same one), which is chosen with a uniform probability $1/N$. Time is then increased by $\Delta t$, the time-step, and the process is repeated. 

From a theoretical point of view, while the discrete model is well understood through standard Markov Chain theory, the invariants of these models, other than probability conservation,  is usually neglected, and we briefly review its mathematical properties.

The fitness will depend, parametrically, on $N$ and $\Delta t$. When $N$ is large, we need to have small time-steps in order to keep an appropriate balance. In this case, a natural assumption is the so-called 
\textsl{weak selection principle}:
\begin{equation}\label{weak_selection}
\Psi^{(i)}(x,N,\Delta t)\approx 1+\left(\Delta t\right)^\nu\psi^{(i)}(x)+\mathrm{o}\left(\left(\Delta t\right)^{\nu}\right)\ ,
\quad N\to\infty, \Delta t\to 0\ , \nu>0 .
\end{equation}

\subsubsection{Scaling up}

We are interested in studying the so-called thermodynamical limits of the discrete models. While these limits can be seen as mean-field type approximation, they are more general than the typical diffusion approximation by partial differential equations through the central limit theorem and Feymann-Kac formula. They are also somewhat less general than the approach by~\cite{ChampagnatFerrierMeleard_TPE2006}, for some aspects of finite population dynamics.

More precisely, consider a one-parameter family of models $\mathcal{M}_\eps$, $\eps>0$, a certain model $\mathcal{M}_0$ and
\begin{enumerate}
 \item Consider a certain family of initial conditions $f_\eps^\ini$ such that $\lim_{\eps\to 0}f_\eps^\ini=f_0^\ini$
in a certain sense;
\item Evolve through the model $\mathcal{M}_\eps$ the initial condition $f_\eps^\ini$ and through the model $\mathcal{M}_0$ the initial 
condition $f_0^\ini$ until the time $t_0\in(0,\infty]$, obtaining $f_\eps(t)$ and $f_0(t)$ respectively, for $t<t_0$;
\end{enumerate}
If, for every time $t<t_0$, we have that $\lim_{\eps\to 0}f_\eps(t)=f_0(t)$, in the same sense as before,
then we say that the model $\mathcal{M}_\eps$ converge in the limit $\eps\to0$ until time $t_0$ and in the same sense as before to the model $\mathcal{M}_0$.

In the limit of large population, and small time-steps, we define the probability density of the states as:
\begin{equation}\label{limit_def}
 p(x,t)=\lim_{N\to\infty, \Delta t\to 0}\frac{P(x,t,N,\Delta t)}{1/N}\ .
\end{equation}
The function $p$ defined by equation~(\ref{limit_def}) will, in most cases, satisfy a certain differential equation, i.e., the continuous model. The particular model will be different depending on the precise scalings imposed.  More precisely, by \textit{scaling}, we  mean the relationship imposed between $N$ and $\Delta t$ when passing to the limits $N\to\infty$, $\Delta t\to 0$. In this work, we also call ``the thermodynamical limit'' the limits  $N\to\infty$, $\Delta t\to 0$ jointly to a given scaling. By extension, we also call ``the thermodynamical limit'' the limit model. 
Without loss of generality, we may impose scalings given by $\left(\Delta t\right)^{-1}\propto N^{\mu}$, $\mu>0$. Then, we find the following table of possible thermodynamical limits:
\begin{center}
 \begin{tabular}{|c|c|c|}
$\mu$&$\nu$&Large population limit\\
\hline
$0<\mu<\frac{1}{2}$&$\quad$&\text{Undefined}\\
$\mu=\frac{1}{2}$&$\nu<\frac{1}{2}$&\text{Undefined}\\
$\mu=\frac{1}{2}$&$\nu=\frac{1}{2}$&\textrm{Replicator-Diffusion Equation}\\
$\mu=\frac{1}{2}$&$\nu>\frac{1}{2}$&\textrm{Diffusion Equation}\\
$\mu>\frac{1}{2}$&$0<\nu<1-\mu$&\text{Undefined}\\
$\mu>\frac{1}{2}$&$0<\nu=1-\mu$&\text{Replicator Equation}\\
$\mu>\frac{1}{2}$&$\nu>1-\mu$&\text{Trivial}
\end{tabular}
\end{center}
This table is synthesized pictorially in Figure~\ref{fig:munu}.

\begin{figure}
\begin{center}
\includegraphics[scale=.3]{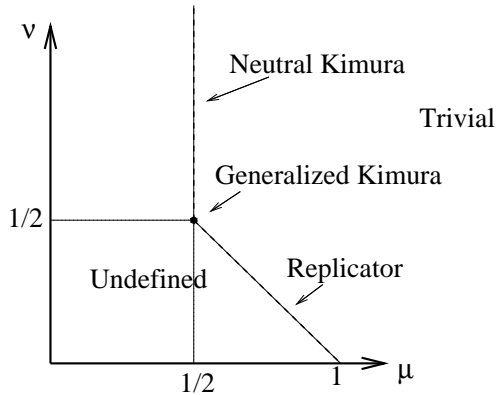}
\end{center}
\caption{Fitnesses scales as $\Psi^{(i)}(x,N,\Delta t)=1+\left(\Delta t\right)^\nu\psi^{(i)}(x)$ and 
the relationship between population size and time-steps is given by $N^{-1}\propto\left(\Delta t\right)^\mu$.
The graph above shows all possible macroscopic limits $\Delta t\to 0$ for the solution of the
Moran process.\label{fig:munu}}
\end{figure}

\subsubsection{Then analysis of continuous models}

The derivation will first be presented in a formal way and then proved rigorous, once a candidate limit is obtained. The techniques used here come from classical numerical analysis~\cite{richtmyer:morton} rather than probability theory---\cite{Ewens,ChampagnatFerrierMeleard_TPE2006}. These ideas assume an earlier analysis of the evolution equation which, in the case of the Replicator-Diffusion model, has been made rigorous in~\cite{ChalubSouza_sub2008}. A skillful, but heuristic, study can be found in~\cite{McKaneWaxman07}.This approach has two clear advantages: it allows for earlier heuristically derivations to be recast in a rigorous framework; it does not depend, necessarily, on the stochasticity of the underlying discrete model. A clear restriction is that it is a mean-field type model and only large populations can be treated.

\subsection{Outline}

In section~\ref{sec:discrete}, we briefly describe the discrete Moran process, with special emphasis on the invariants of the process. In section~\ref{sec:scalings}, we obtain the \textsl{formal} limit
of the Moran process in the large population limit. The precise limit will depend on the different scalings assumption. In particular, for a certain choice, we obtain a \textit{maximal balance} model that takes into account both genetic-drift and natural selection: the replicator-diffusion equation \eqref{generalized_kimura}. In section~\ref{sec:kimura}, we review some of the results obtained in \cite{ChalubSouza_sub2008} regarding the replicator-diffusion equation. As a  particular application of the results, we point out how a generalization of the Kimura equation to the frequency-dependent case can be obtained through duality from \eqref{generalized_kimura}. We also prove the convergence of the discrete model to the continuous one, in the appropriate limit. We also present some additional results on the connection of this model with game-theoretic issues on optimal fixation strategies. In section~\ref{sec:replicator}, we study another class of limit models, and show that they are equivalent to the Replicator equation. Additionally, we also study the fixation probabilities in the small diffusion limit.

\section{Discrete Dynamics}

\label{sec:discrete}

We consider a fixed size population with two types of individuals, $\mathbb{A}$ and $\mathbb{B}$ say. We want to study the following evolutionary process: at a given time, we chose a newborn as copy of one of the individuals taken from the current population with probability proportional to its fitness. This newborn replaces a randomly chosen individual which is, therefore, killed. The process is then repeated  after a timestep of $\Delta t$. This is the so-called Moran process~\cite{Moran}; see Figure~\ref{Moran_fig} for an illustration.

For each type, we associate a fitness function depending on the type frequency:
\[
 \Psi^{(\mathbb{A})},\Psi^{(\mathbb{B})}:[0,1]\to\R^+.
\]

\begin{figure}
\begin{center}
\includegraphics[scale=.15]{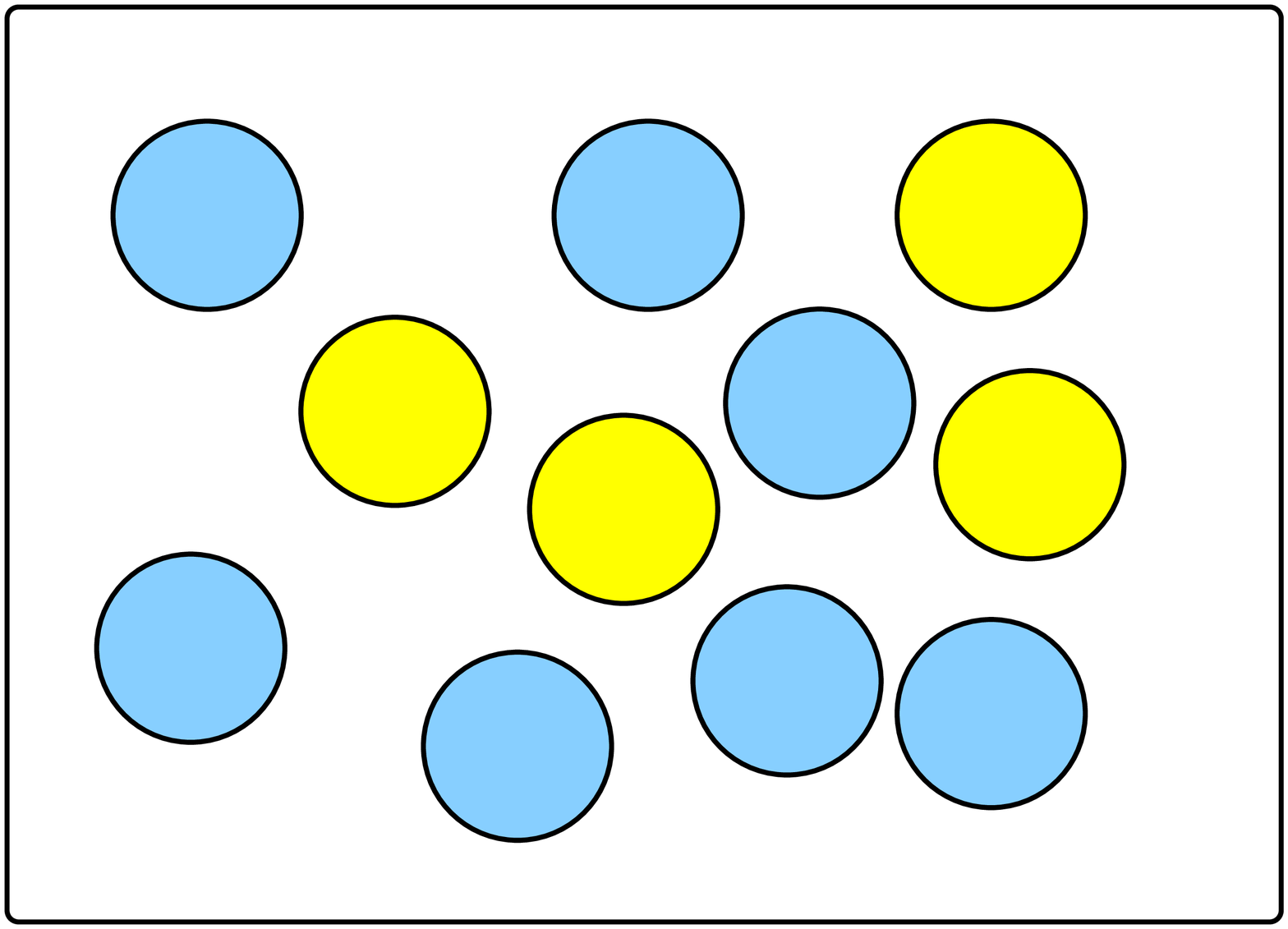}
\hspace{.01\textwidth}\includegraphics[scale=.15]{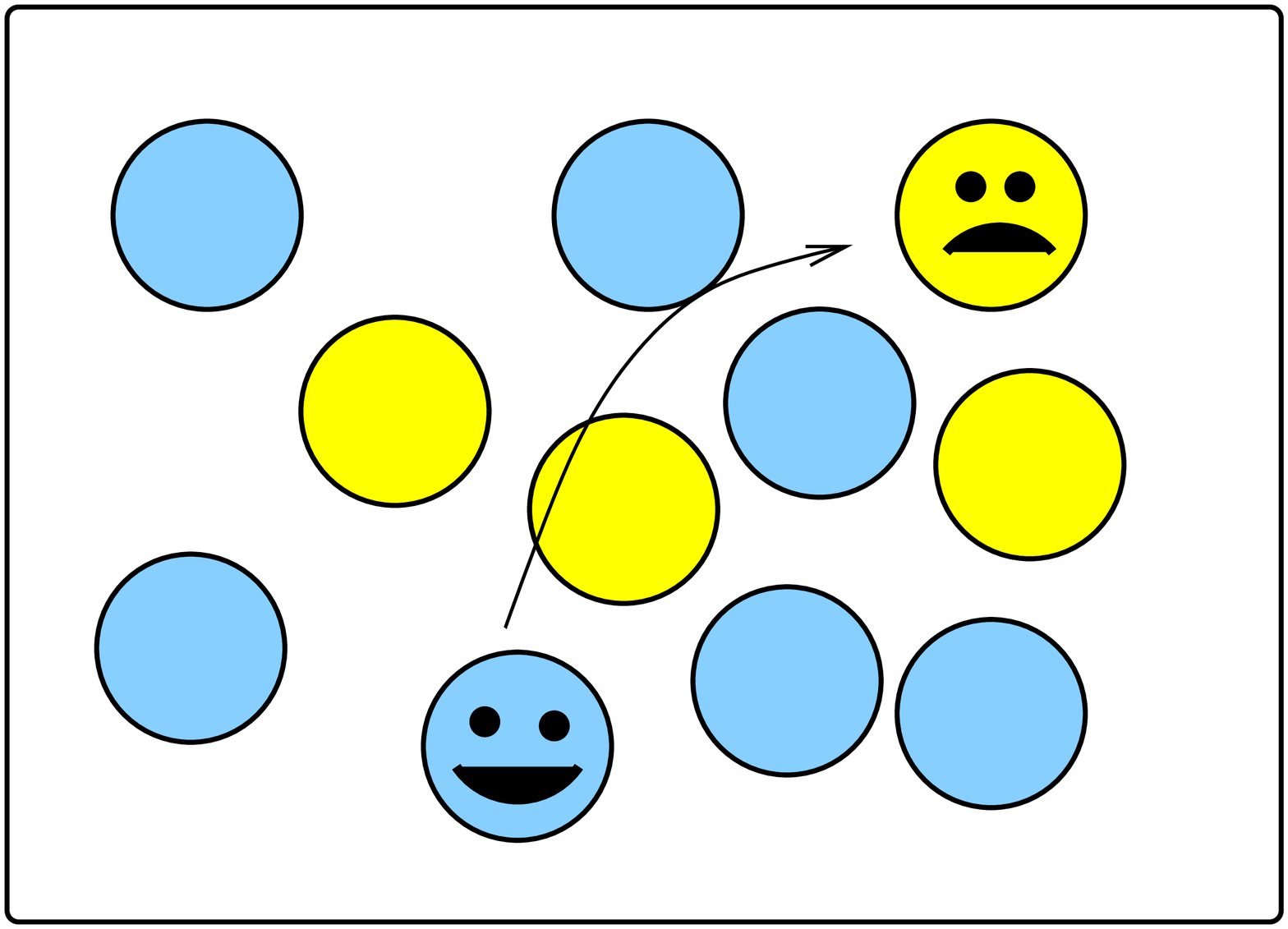}
\hspace{.01\textwidth}\includegraphics[scale=.15]{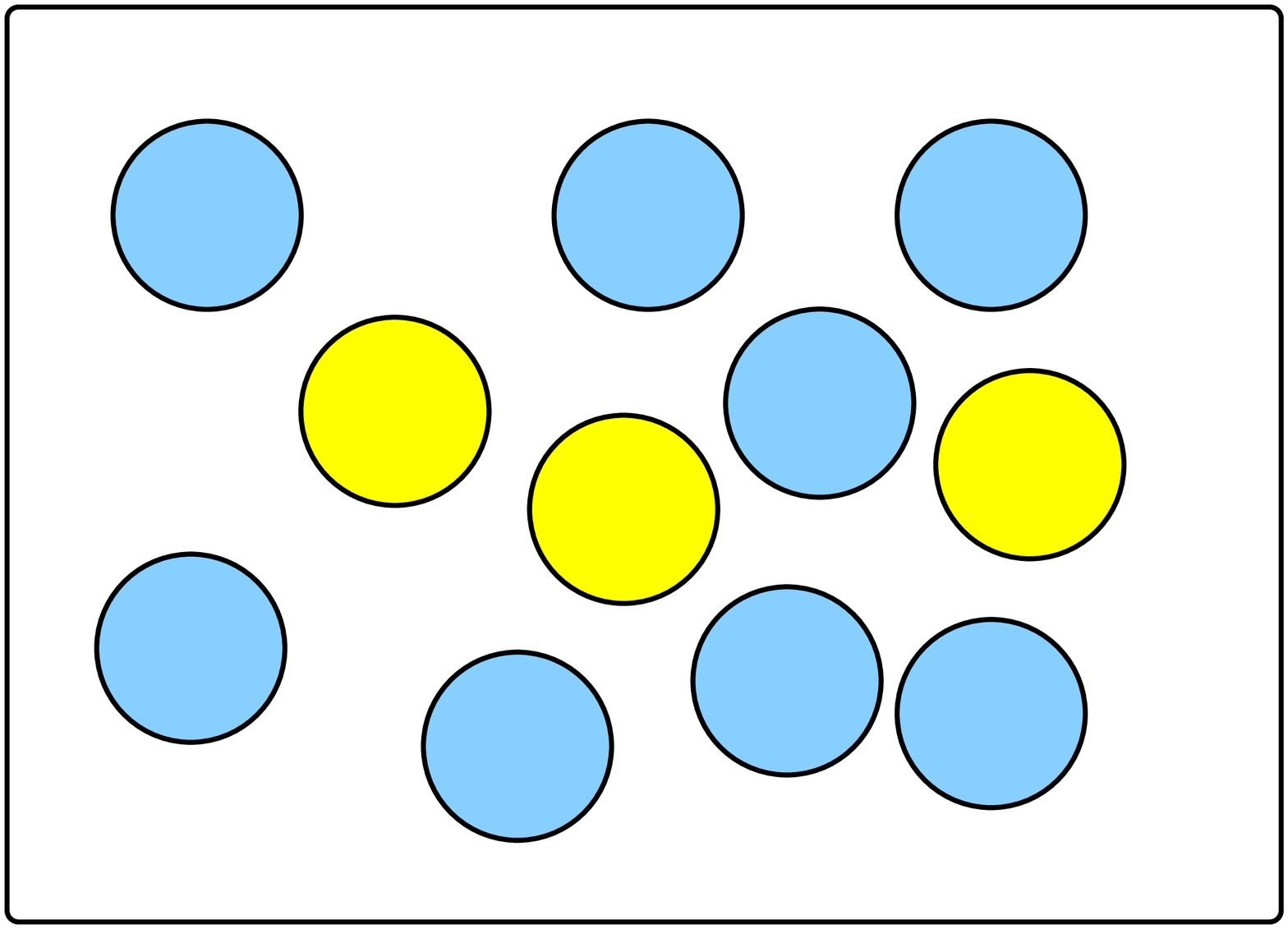}
\vspace{.1cm}
\text{(a)\hspace{.29\textwidth} (b)\hspace{.29\textwidth} (c)}
\end{center}
\caption{The Moran process: from a two-types population~(a) we chose 
one to reproduce with probability proportional to the fitness, ans a second one  
(possibly the same) to be eliminated with equal probability~(b), generating
a new generation~(c).%
\label{Moran_fig}}
\end{figure}
Let $P(x,t,N,\Delta t)$ be the probability that there are $n=xN$,
with $x\in\{0,\frac{1}{N},\frac{2}{N},\cdots,1\}$, individuals of type $\mathbb{A}$ at time $t$, in a population of fixed size $N$. Let $\Delta t$ be the time-step between each re-sampling of the population. 
At each step, the size of the type $\mathbb{A}$ can either increase by one, stay constant or  decrease by one. If we denote the transitions probabilities by $c_+(x,N,\Delta t)$, $c_0(x,N,\Delta t)$ and $c_-(x,N,\Delta t)$, respectively, we find that:
\begin{eqnarray}
\label{c+_def}
c_+(x,N,\Delta t)&=&\frac{x(1-x)\Psi^{(\mathbb{A})}(x,N,\Delta t)}
{x\Psi^{(\mathbb{A})}(x,N,\Delta t)+(1-x)\Psi^{(\mathbb{B})}(x,N,\Delta t)}\ ,\\
\label{c0_def}
c_0(x,N,\Delta t)&=&\frac{x^2\Psi^{(\mathbb{A})}(x,N,\Delta t)}{x\Psi^{(\mathbb{A})(x,N,\Delta t)}+(1-x)\Psi^{(\mathbb{B})}(x,N,\Delta t)}
\\
\nonumber
&&+\frac{(1-x)^2\Psi^{(\mathbb{B})}(x,N,\Delta t)}{x\Psi^{(\mathbb{A})(x,N,\Delta t)}+(1-x)\Psi^{(\mathbb{B})}(x,N,\Delta t)}\ ,\\
\label{c-_def}
c_-(x,N,\Delta t)&=&\frac{x(1-x)\Psi^{(\mathbb{B})}(x,N,\Delta t)}{x\Psi^{(\mathbb{A})}(x,N,\Delta t)+(1-x)\Psi^{(\mathbb{B})}(x,N,\Delta t)}\ .
\end{eqnarray}
From that, we may easily write an equation for the evolution of $P$:
\begin{eqnarray}
\nonumber
P(x,t+\Delta t,N,\Delta t)&=&c_+(x-N^{-1},N,\Delta t)P(x-N^{-1},t,N,\Delta t)\\
\nonumber
&&+c_0(x,N,\Delta t)P(x,t,N,\Delta t)\\
\nonumber
&&+c_-(x+N^{-1},N,\Delta t)P(x+N^{-1},t,N,\Delta t),\\
\label{evol_disc}
&&\quad x\in\{0,1/N,\ldots,1\}\ .
\end{eqnarray}
Let us define the relative fitness
\begin{equation}\label{relative_fitness}
\rho(x,N,\Delta t)=\frac{\Psi^{(\mathbb{A})}(x,N,\Delta t)}{\Psi^{(\mathbb{B})}(x,N,\Delta t)}\ .
\end{equation}
Also, let
\[
f(x)=x(1-x)\quad\text{and}\quad
g(x;\rho)=1+(\rho-1)x\ .
\]
Then it is a straightforward computation to verify that 
\begin{eqnarray*}
c_+(x,N,\Delta t)&=&\frac{f(x)\rho(x,N,\Delta t)}{g(x,\rho(x,N,\Delta t))}\ ,\\
c_0(x,N,\Delta t)&=&1-\frac{f(x)}{g(x,\rho(x,N,\Delta t))}\left(1+\rho(x,N,\Delta t)\right)\ ,\\
c_-(x,N,\Delta t)&=&\frac{f(x)}{g(x,\rho(x,N,\Delta t))}\ ,
\end{eqnarray*}
Let
\[
\mathbf{P}(t)=\left(P(0,t,N,\Delta t),P\left(\frac{1}{N},t,N,\Delta t\right),\ldots,P(1,t,N,\Delta t)\right)^\dagger\ .
\]
Then we can rewrite \eqref{evol_disc} in matrix form as
\[
 \mathbf{P}(t+\Delta t)=\mat\mathbf{P}(t),
\]
where $\mat$ is a $(N+1)\times (N+1)$, tridiagonal matrix, with entries given by
\begin{align*}
&\mat_{ii}=c_0\left(\frac{i}{N},N,\Delta t\right),\quad i=0,\ldots, N\ ,\\
&\mat_{(i+1)i}=c_{+}\left(\frac{i}{N},N,\Delta t\right)\quad i=0,\ldots, N-1,\\
&
\mat_{i(i+1)}=c_{-}\left(\frac{i+1}{N},N,\Delta t\right),\quad  i=0,\ldots, N-1.
\end{align*}
Since $\rho(x,N,\Delta t)>0$, it is easy to see that $\mat$ is
a non-negative matrix. Since $c_{-}(x,N,\Delta t)+c_{0}(x,N,\Delta t)+c_{+}(x,N,\Delta t)=1$,
$\mat$ is column stochastic. 

The long term evolution is described by the following theorem: 
\begin{prop}
\label{Akto0}
Let $\mat$ be as above. Then
\begin{enumerate}
\item
\[
\lim_{k\to\infty}\mat^k=\begin{pmatrix}
1&1-F_1&\ldots&1-F_{N-1}&0\\
0&0&\ldots&\ldots&0\\
\vdots&\vdots&\vdots&\vdots&\vdots\\
0&0&\ldots&\ldots&0\\
0&F_1&\ldots&F_{N-1}&1
\end{pmatrix},
\label{our_prop}
\]
where the $F_n$ satisfy
\begin{align}
&F_{n}=c_{+}\left(\frac{n}{N},N,\Delta t\right)F_{n+1}+c_0\left(\frac{n}{N},N,\Delta t\right)F_n+
c_{-}\left(\frac{n}{N},N,\Delta t\right)F_{n-1}, \nonumber\\
&F_0=0\quad\text{and}\quad F_N=1.
\label{bs:fixeq}
\end{align}
In particular, any stationary state will be concentrated at the endpoints.
\item If $\mathbf{1}$ denotes the vector $(1,1,\ldots,1)^\dagger$,
  $\mathbf{F}=(F_0,F_1,\ldots,F_N)^\dagger$ and  if
  $\langle\cdot,\cdot,\rangle$ denotes the usual inner  product, then 
we have that
\[
\langle \mathbf{P}(t),\mathbf{1} \rangle =\langle
\mathbf{P}(0),\mathbf{1} \rangle 
\quad\text{and}\quad
\langle \mathbf{P(t)}, \mathbf{F} \rangle = \langle \mathbf{P(0)},
\mathbf{F} \rangle.
\]
In particular, the $l^1$-norm of a non-negative initial condition is preserved.
\end{enumerate}
\end{prop}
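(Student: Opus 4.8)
The plan is to exploit the reducible structure of $\mat$. Because $f(0)=f(1)=0$, columns $0$ and $N$ of $\mat$ are the standard basis vectors $e_0$ and $e_N$, so states $0$ and $N$ are absorbing, while the off-diagonal entries $c_\pm(n/N)$ are strictly positive for $0<n<N$. Reordering the basis as $\{0\},\{N\},\{1,\dots,N-1\}$, one can write
\[
\mat=\begin{pmatrix} I_2 & R\\ 0 & Q\end{pmatrix},
\]
where $Q$ is the $(N-1)\times(N-1)$ tridiagonal block of interior transitions and $R$ the $2\times(N-1)$ block of transitions into the two endpoints. Then $Q$ is nonnegative and irreducible (its path graph has all edges), it is column sub-stochastic, and its column sums at the states $1$ and $N-1$ equal $1-c_-(1/N)<1$ and $1-c_+((N-1)/N)<1$, respectively.

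First I would show that the spectral radius of $Q$ is strictly less than $1$: testing $Qw=\lambda w$ against $\mathbf 1$ for the positive Perron eigenvector $w$ expresses $\lambda\sum_i w_i$ as a sum of the column sums of $Q$ weighted by the positive numbers $w_i$, which is $<\sum_i w_i$, so $\lambda<1$. Hence $Q^k\to 0$ and $\sum_{m\ge0}Q^m=(I-Q)^{-1}$. Raising the block form to the $k$-th power gives
\[
\mat^k=\begin{pmatrix} I_2 & R\sum_{m=0}^{k-1}Q^m\\ 0 & Q^k\end{pmatrix},
\]
and letting $k\to\infty$ yields $L:=\lim_k\mat^k=\left(\begin{smallmatrix} I_2 & R(I-Q)^{-1}\\ 0 & 0\end{smallmatrix}\right)$. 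This already has the claimed shape: the interior rows vanish, the columns indexed $0$ and $N$ are $e_0,e_N$, and each interior column $n$ has only two nonzero entries, which we name $1-F_n$ (first row) and $F_n$ (last row); that the two add up to $1$ is forced by column-stochasticity, which is inherited by $\mat^k$ and by the limit, and the boundary values $F_0=0$, $F_N=1$ are read off directly from columns $0$ and $N$ of $L$. That any stationary state is concentrated at the endpoints is then immediate, since $\mat\mathbf P_*=\mathbf P_*$ forces $\mathbf P_*=L\mathbf P_*$, whose interior components vanish. To obtain \eqref{bs:fixeq}, pass to the limit in $\mat^{k+1}=\mat^k\mat$ to get $L=L\mat$, transpose, and read off the last column of $L^\dagger$ (which is $\mathbf F$, the last row of $L$): this gives $\mat^\dagger\mathbf F=\mathbf F$, and writing this identity componentwise at the indices $0<n<N$ is exactly the three-term recursion \eqref{bs:fixeq}.

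For part (2), the same two facts suffice: column-stochasticity of $\mat$ says $\mathbf 1^\dagger\mat=\mathbf 1^\dagger$, i.e.\ $\mat^\dagger\mathbf 1=\mathbf 1$, and we have just shown $\mat^\dagger\mathbf F=\mathbf F$; iterating $\mathbf P(n\Delta t)=\mat^n\mathbf P(0)$ and moving $\mat^n$ across the inner product gives $\langle\mathbf P(n\Delta t),\mathbf 1\rangle=\langle\mathbf P(0),(\mat^\dagger)^n\mathbf 1\rangle=\langle\mathbf P(0),\mathbf 1\rangle$ and likewise with $\mathbf F$. Since $\mat\ge 0$ keeps $\mathbf P(t)$ nonnegative, $\langle\mathbf P(t),\mathbf 1\rangle$ is its $l^1$-norm, which gives the last assertion. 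The only genuinely non-routine step is the Perron–Frobenius argument that the spectral radius of $Q$ is $<1$ (equivalently, that absorption at the endpoints is certain): this is what guarantees that $\mat^k$ converges at all and that the interior block of the limit vanishes. Everything else is bookkeeping with the block decomposition, and one could alternatively simply invoke the standard theory of absorbing Markov chains.
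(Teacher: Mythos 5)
Your proof is correct, but it follows a genuinely different route from the paper's. The paper argues spectrally on the full matrix $\mat$: it invokes Gershgorin to get $|\lambda|\le 1$, notes that the spectrum of a positive tridiagonal matrix is real, and then shows that any stationary state must be supported on $\{0,N\}$ by proving that $\overline{\mat}=\mat^{N}$ has strictly positive interior columns with column sums $<1$, so that mass supported on the interior strictly decreases. You instead use the classical absorbing-chain block decomposition $\mat=\left(\begin{smallmatrix} I_2 & R\\ 0 & Q\end{smallmatrix}\right)$, prove $\rho(Q)<1$ by pairing the Perron eigenvector of $Q$ with $\mathbf 1$ and exploiting the sub-stochastic columns at $n=1$ and $n=N-1$, and then read everything off the explicit formula for $\mat^k$. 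Your approach buys more: it actually establishes the \emph{convergence} of $\mat^k$ and the precise form of the limit (including the identification of the interior columns as $(1-F_n,0,\ldots,0,F_n)^\dagger$ and the recursion \eqref{bs:fixeq} via $L\mat=L$), whereas the paper's argument, as written, only characterizes the eigenspace for $\lambda=1$ and does not explicitly rule out other peripheral spectrum or defective behaviour before asserting the limit; the recursion for $F_n$ is obtained there only afterwards, through the adjoint relation $\mat^\dagger\mathbf F=\mathbf F$. What the paper's route buys is a very elementary, self-contained positivity argument that avoids Perron--Frobenius altogether. Your part (2) is essentially identical to the paper's: both reduce the two invariants to $\mat^\dagger\mathbf 1=\mathbf 1$ (column stochasticity) and $\mat^\dagger\mathbf F=\mathbf F$, and move $\mat$ across the inner product.
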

\begin{proof}
For part 1, it is clear that from the definition of the transition matrix
$\mathbf{M}$ that $\mathbf{e}_0=(1,0,0,\ldots,0)$ and $\mathbf{e}_{N}=(0,0,\ldots,0,1)$
are eigenvectors associated to the eigenvalue $1$. From  the Gersgorin theorem (see~\cite{HJ} for instance),
it is clear that all eigenvalues $\lambda$ are such that $|\lambda|\le 1$. For
tridiagonal matrices with positive entries, the spectrum is real, i.e., 
$\sigma(\mathbf{M})\subset\mathbb{R}$. Let us consider the matrix $\overline{\mathbf{M}}\bydef
\mathbf{M}^N$. Then, $\overline{M}_{00}=\overline{M}_{NN}=1$, 
$\overline{M}_{(i+1)0}=\overline{M}_{iN}=0$ for $i=0,\ldots,N-1$ and
$\overline{M}_{ij}>0$ for $i=0,\ldots,N$ and $j=1,\ldots, N-1$.  This last inequality represents the
fact that it is always possible to go from any non-trivial initial condition 
to any other possible state in $N$ time-steps. In particular,
$0<\sum_{j=1}^{N-1}\overline{M}_{ji}<1$ for $i=1,\ldots, N-1$.

We now prove that any stationary state (i.e. an eigenvector associated to an eigenvalue $\lambda=1$) of the matrix $\overline{\mathbf{M}}$ must be a linear  combination of $\mathbf{e}_0$ and $\mathbf{e}_{N}$. Therefore, this will be also true for $\mathbf{M}$.
Let $\overline{\mathbf{P}}$ be a state with $\overline{P}_0=\overline{P}_{N}=0$. Then
\[
\sum_{j=1}^{N-1}\left(\overline{\mathbf{M}}\,\overline{\mathbf{P}}\right)_{j}=
\sum_{j=1}^{N-1}\sum_{i=0}^{N}\overline{M}_{ji}\overline{P}_i=
\sum_{i,j=1}^{N-1}\overline{M}_{ji}\overline{P}_i<\sum_{i=1}^{N-1}\overline{P}_i\ .
\] 
Finally, $\mathbf{M}\overline{\mathbf{P}}\ne\overline{\mathbf{P}}$.
As for part 2, we first observe that, if a vector $\mathbf{V}$ satisfies
$\mat^\dagger\mathbf{V}=\mathbf{V}$, then we have that
\[\langle \mathbf{P(t+\Delta t)}, \mathbf{V} \rangle=
\langle \mathbf{M}\mathbf{P(t)}, \mathbf{V}
\rangle=
\langle \mathbf{P(t)}, \mathbf{M}^\dagger\mathbf{V}
\rangle=
\langle \mathbf{P(t)}, \mathbf{V}
\rangle.
\]
Hence 
$\langle \mathbf{P}(t),\mathbf{V} \rangle =\langle
\mathbf{P}(0),\mathbf{V}\rangle.$
From the fact that $\mat$ is column stochastic, we easily conclude
that
$\mat^\dagger\mathbf{1}=\mathbf{1}$,
and the first invariant follows. For the second invariant, we observe that
Equation (\ref{bs:fixeq}) can be written in matrix notation as
$\mathbf{M}^{\dagger}\mathbf{F}=\mathbf{F}$,
which concludes the proof.
\end{proof}
\begin{rmk}
The invariants described in part 2, while natural at the discrete level, will play an important role in the correct specification of the continuous problem. 
\end{rmk}

The $F_n$ turns out to be the fixation probability of
type $\mathbb{A}$ individuals, when the process starts in $t=0$ with $n$ type $\mathbb{A}$ 
individuals.

From the definitions of $c_{*}(x,N,\Delta t)$, $*=-,0,+$, and $\rho_n=\rho\left(\frac{n}{N},N,\Delta t\right)$, we see that $F_n$ satisfies:
\begin{equation}
\left\{
\begin{array}{rcl}
&&\rho_nF_{n+1}-\left(\rho_n+1\right)F_n+F_{n-1}=0\ ,\\
&&F_0=0\quad \text{and}\quad F_N=1.
\end{array}
\right.
\label{gen:fixeq}
\end{equation}

Equation (\ref{gen:fixeq}) can be solved exactly, with  solution given  by
\begin{equation}
F_n=\frac{\sum_{k=1}^{n}\prod_{i=1}^{k-1}\frac{1}{\rho_i}}{\sum_{k=1}^{N}\prod_{i=1}^{k-1}\frac{1}{\rho_i}}.
\label{gen:fixsoln}
\end{equation}
The expression given by (\ref{gen:fixsoln}) does not  appear to yield
a simple formula in the general case. For constant fitness, see, however,~\cite{Nowak_EvolutionaryDynamics}.

\section{Scalings and Thermodynamical Limits}
\label{sec:scalings}

\subsection{Preliminaries}
The aim of this section is to derive a continuous
approximation, i.e., a PDE model that approximates the discrete process described in Section~\ref{sec:discrete}, in the large population limit.

We define the probability density that at time $t$ we
have a fraction $x\in[0,1]$ of type $\mathbb{A}$ individuals
\begin{equation*}
\mathcal{P}(x,t,N,\Delta t)\bydef \frac{P(x,t,N,\Delta t)}{1/N}\ ,
\quad\text{with}\quad  x=\frac{n}{N},\quad n=0,1,2,\ldots,N.
\end{equation*}
Furthermore, 
we assume that in the thermodynamical limit, 
$\mathcal{P}(x,t,N,\Delta t)$ converges, in a sense to be made precise in section 4.2, 
  to a function $p(x,t)$ which is sufficiently smooth so that 
\begin{align}
\label{smooth_x}
&p\left(x\pm\frac{1}{N},t\right)=p(x,t)\pm\frac{1}{N}\partial_xp(x,t)+
\frac{1}{2N^2}\partial_x^2p(x,t)+\O(N^{-3})\ ,\\
\label{smooth_t}
&p\left(x,t+\Delta t\right)=p(x,t)+\left(\Delta t\right)\partial_tp(x,t)+\O\left(\left(\Delta t\right)^2\right)\ .
\end{align}

\subsection{Selection scalings}

From the weak selection hypothesis, equation~(\ref{weak_selection}), we 
write the
relative fitness, defined by equation~(\ref{relative_fitness}), asymptotically as
\begin{equation}
 \rho(x,N,\Delta t)\approx 1+
\left(\Delta t\right)^\nu\left(\psi^{(\mathbb{A})}(x)-\psi^{(\mathbb{B})}(x)\right)
+\mathrm{o}\left(\left(\Delta t\right)^{\nu}\right)\ .
\label{rho:ws}
\end{equation}
We also find up to order $y\left(\Delta t\right)^\nu$, defining $y=N^{-1}$,
\begin{align*}
&c_+(x-y,N,\Delta t)\approx x(1-x)\left[1+\left(1-x\right)\left(\Delta t\right)^\nu\left(\psi^{(\mathbb{A})}(x)-\psi^{(\mathbb{B})}(x)\right)\right]
+y(1-2x)\\
&\quad-y^2 -\left(\Delta t\right)^{\nu}y\bigl[x(1-x)^2\partial_x\left(\psi^{(\mathbb{A})}(x)-\psi^{(\mathbb{B})}(x)\right)-
x(1-x)\left(\psi^{(\mathbb{A})}(x)-\psi^{(\mathbb{B})}(x)\right)\\
&\qquad\qquad+(1-2x)(1-x)\left(\psi^{(\mathbb{A})}(x)-\psi^{(\mathbb{B})}(x)\right)\bigr]\ ,\\
&c_0(x,N,\Delta t)\approx 1-2x(1-x)\left[1-\left(\Delta t\right)^\nu\frac{1-2x}{2}
\left(\psi^{(\mathbb{A})}(x)-\psi^{(\mathbb{B})}(x)\right)\right]\\
&c_-(x+y,N,\Delta t)\approx x(1-x)\left[1-x\left(\Delta t\right)^\nu\left(\psi^{(\mathbb{A})}(x)-\psi^{(\mathbb{B})}(x)\right)\right]
+y(1-2x)-y^2\\
&\qquad-\left(\Delta t\right)^{\nu}y\bigr[x^2(1-x)\partial_x\left(\psi^{(\mathbb{A})}(x)-\psi^{(\mathbb{B})}(x)\right)
+x(1-x)\left(\psi^{(\mathbb{A})}(x)-\psi^{(\mathbb{B})}(x)\right)\\
&\qquad\qquad+(1-2x)x\left(\psi^{(\mathbb{A})}(x)-\psi^{(\mathbb{B})}(x)\right)\bigl]\ ,
\end{align*}

Joining these last three equation to the evolution equation~(\ref{evol_disc}) and using assumptions~(\ref{smooth_x})
and~(\ref{smooth_t}), we find
\[
\partial_tp=-\frac{y}{\left(\Delta t\right)^{1-\nu}}\partial_x\left(x(1-x)\left(\psi^{(\mathbb{A})}(x)-\psi^{(\mathbb{B})}(x)\right)p\right)
+\frac{y^2}{2\Delta t}\partial_x^2\left(x(1-x)p\right)\ .
\]
Now we assume $N^{-1}=y=\eps\left(\Delta t\right)^\mu$, with $\eps>0$, and take the limit $\Delta t\to 0$ and, re-scaling time, we find all non-trivial possibilities
according to the Figure~\ref{fig:munu}, where
for $\mu\in\left(\frac{1}{2},1\right]$, $\nu=1-\mu$ we find the 
\textsl{diffusion equation}
\begin{equation}\label{neutral_kimura}
\partial_t p=\frac{1}{2}\partial_x^2\left(x(1-x)p\right)\ ;
\end{equation}
for $\mu=\nu=\frac{1}{2}$, we find the 
\textsl{replicator-diffusion equation}
\begin{equation}\label{generalized_kimura}
\partial_t p=\frac{\eps}{2}\partial_x^2\left(x(1-x)p\right)-\partial_x\left(x(1-x)\left(\psi^{(\mathbb{A})}(x)-\psi^{(\mathbb{B})}(x)\right)p\right)\ ;
\end{equation}
and finally for $\nu>\mu=\frac{1}{2}$ we find the (partial differential version of the) \textsl{replicator equation}:
\begin{equation}\label{pde_replicator}
 \partial_t p=-\partial_x\left(x(1-x)\left(\psi^{(\mathbb{A})}(x)-\psi^{(\mathbb{B})}(x)\right)p\right)\ .
\end{equation}
The invariants in Proposition~\ref{Akto0} become the following conservation laws:
\begin{align}
&\frac{\d}{\d t}\int_0^1p(x,t)\,\d x = 0,\label{cons:prob}\\
&\frac{\d}{\d t}\int_0^1\phi(x)p(x,t)\,\d x = 0,\label{cons:psi}
\end{align}
where  $\phi$ satisfies
\[
 \frac{\eps}{2}\phi''+\left(\psi^{(\mathbb{A})}(x)-\psi^{(\mathbb{B})}(x)\right)\phi'=0,\quad \phi(0)=0,\quad \phi(1)=1\ .
\]
The conservation law \eqref{cons:prob} applies to \eqref{neutral_kimura}, \eqref{generalized_kimura} and \eqref{pde_replicator}. 
The conservation law \eqref{cons:psi} applies to \eqref{neutral_kimura} (with $\phi(x)=x$), and to \eqref{generalized_kimura}.

\section{The Replicator-Diffusion Equation}
\label{sec:kimura}

The equation \eqref{generalized_kimura}, without the conservations laws \eqref{cons:prob} and \eqref{cons:psi}, is a classical parabolic problem that is well posed, without imposing any boundary conditions~\cite{Friedman2,DiBenedetto93}. However, it also true that any such a solution will vanish in the long time limit and, thus, \eqref{cons:prob} and \eqref{cons:psi} cannot be satisfied. 

Is possible to circumvent this difficulty, by using an appropriate notion of weak solution, that is somewhat different from the standard one. This has been done in detail in~\cite{ChalubSouza_sub2008}, to where we refer the interested reader.  In subsection~\ref{subsec:kimura_ss1}, we briefly overview the results presented in~\cite{ChalubSouza_sub2008}; in particular, we point out  that a generalized form of the Kimura equation can be obtained as a dual of \eqref{generalized_kimura} and that, with this notion of solution,  the forward/backward Kolmogorov pair is related through duality. 
 In subsection~\ref{subsec:kimura_ss2} we show that the replicator-diffusion equation~(\ref{generalized_kimura})
is the right approximation of the discrete Moran process in the scaling $\mu=\nu=\frac{1}{2}$, used through this entire
section. Finally, in subsection~\ref{subsec:kimura_ss3}, we deviate  and, on assuming the usual relationship between fitness and payoff in game-theory, we  study the connection of the Replicator-Diffusion fixation probabilities with strategies, and show that the strategies that maximize the fixation probability are exactly the same strategies that maximizes fitness, i.e., the Nash-equilibrium strategies.

\subsection{The solution to the continuous problem}

\label{subsec:kimura_ss1}

The two conservation laws obtained in Proposition~\ref{Akto0} generalizes for the
continuous case at the form $\partial_t\int p\phi\d x=0$, where $\phi$ is any of the
two linearly independent solutions of 
\[
 \frac{\eps}{2}\phi''+\left(\psi^{(\mathbb{A})}(x)-\psi^{(\mathbb{B})}(x)\right)\phi'=0\ .
\]
One solution is given by $\phi_1(x)=1$ and the associated conservation law is simply the
conservation of probability $\int p\d x=1$. The second one is given by
\begin{equation}\label{second_conservation_law}
\phi_2(x)=\frac{\int_0^x\exp\left[-\frac{2}{\eps}\int_0^{x'}\left(\psi^{(\mathbb{A})}(x'')-\psi^{(\mathbb{B})}(x'')\right)\d x''\right]\d x'}
{\int_0^1\exp\left[-\frac{2}{\eps}\int_0^{x'}\left(\psi^{(\mathbb{A})}(x'')-\psi^{(\mathbb{B})}(x'')\right)\d x''\right]\d x'}\ .
\end{equation}
Equation~(\ref{generalized_kimura}) with the two conservation laws is well posed problem, if the appropriate notion of solution is used. See~\cite{ChalubSouza_sub2008} for further details. Here, we only review the main result concerning the solution of this problem. 

Let $\bm$ denote the positive Radon measures in $[0,1]$. Then we have:
\begin{thm}
\label{math:thm}
For a given $p^\ini\in\bm$, there exists a unique (weak) solution $p$ to Equation~(\ref{generalized_kimura}), 
with $p\in\lbm$ and such that $p$ satisfies the conservations laws 
$\partial_t\int_0^1p(x,t)\phi_i(x)\d x$, $i=1,2$. 
The solution can be written as
\[
p(x,t)=r(x,t)+a(t)\delta_0+b(t)\delta_1,
\]
where $r\in\cicic$ is a classical (regular) solution to (\ref{generalized_kimura}) without boundary conditions, and $\delta_y$ denotes the singular measure supported 
at $y$. We also have that 
$a(t)$ and $b(t)$,
belong to $C([0,\infty))\cap C^{\infty}(\rb^+)$. In particular, we have that 
\[
p\in C^{\infty}(\rb^{+},\bm)\cap C^{\infty}(\rb^{+},C^{\infty}((0,1))\ . 
\]
For large time, we have that $\lim_{t\to\infty}r(x,t)=0$, uniformly, 
and that $a(t)$ and $b(t)$, the transient extinction and fixation probabilities,
respectively, are monotonically increasing functions.
Moreover, we have that
\[
\lim_{t\to\infty}p(t,\cdot)=\pi_0[p^\ini]\delta_0+\pi_1[p^\ini]\delta_1,
\] 
with respect to the Radon metric. Finally, the convergence rate is exponential.
\end{thm}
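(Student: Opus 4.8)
The plan is to change variables so that the drift term is absorbed, reducing \eqref{generalized_kimura} to a problem in divergence form whose associated bilinear form has good coercivity properties; then construct the solution by a combination of a classical interior parabolic theory and a careful treatment of the degenerate boundary behaviour at $x=0,1$; and finally extract the long-time asymptotics from an energy/entropy argument together with the two conservation laws. More precisely, first I would write $p = w\,\phi_2'^{-1}$ (or a symmetrising weight built from $\exp[-\tfrac2\eps\int(\psi^{(\A)}-\psi^{(\B)})]$) to recast the operator as a self-adjoint degenerate diffusion $\partial_t w = \tfrac{\eps}{2}\,\partial_x(\,a(x)\,\partial_x(b(x)w)\,)$ with $a,b$ vanishing like $x(1-x)$ at the endpoints. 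This is the operator whose time-independent kernel is spanned by $\delta_0,\delta_1$ in the sense dual to $\mathrm{span}\{\phi_1,\phi_2\}$; the two conservation laws then say precisely that the pairing of $p(\cdot,t)$ against $\phi_1$ and $\phi_2$ is frozen, which is what pins down the otherwise non-unique weak solution.

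Second, for existence and regularity I would treat the regular part $r$ and the atomic parts $a(t)\delta_0+b(t)\delta_1$ separately. On any compact subinterval $[\eta,1-\eta]$ the equation is uniformly parabolic with smooth coefficients, so interior Schauder/bootstrap estimates give $r\in C^\infty(\R^+;C^\infty((0,1)))$ once we know a weak solution exists; existence of the weak solution itself I would get by a Galerkin or semigroup approximation in a weighted $L^2$ space adapted to the symmetrised form, using the conservation laws to control the mass that would otherwise be lost at the boundary. The decomposition $p=r+a\delta_0+b\delta_1$ and the continuity/smoothness of $a,b$ (i.e. $a,b\in C([0,\infty))\cap C^\infty(\R^+)$) would follow by testing against functions that are $1$ near an endpoint and localising: $a(t)=\lim_{\eta\to0}\int_0^\eta p$, and the evolution of this quantity is governed by the boundary flux, which the interior regularity of $r$ shows is smooth in $t>0$. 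Monotonicity of $a$ and $b$ comes from the fact that the boundary flux has a sign — probability only flows into the absorbing states — which at the PDE level is a maximum-principle statement about $r\ge 0$.

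Third, for the long-time behaviour I would use the symmetrised structure: the regular part $r$ solves a degenerate parabolic equation with a spectral gap (the first nonzero eigenvalue $\lambda_1>0$ of the symmetrised operator on the weighted space, with the two conserved pairings removed), giving $\|r(\cdot,t)\|\lesssim e^{-\lambda_1 t}$ and hence $r\to0$ uniformly after using the interior smoothing once more. Since total mass and the $\phi_2$-pairing are conserved, passing to the limit in $\int p\,\phi_i\,\d x$ forces $a(\infty)\phi_i(0)+b(\infty)\phi_i(1)=\int p^\ini\phi_i$ for $i=1,2$; solving this $2\times2$ linear system (nonsingular because $\phi_1,\phi_2$ separate the endpoints) yields $a(\infty)=\pi_0[p^\ini]$, $b(\infty)=\pi_1[p^\ini]$ with $\pi_0+\pi_1$-type normalisation, and convergence in the Radon metric is then immediate from $r\to0$ plus $a(t)\to a(\infty)$, $b(t)\to b(\infty)$; the rate is exponential with the same $\lambda_1$.

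\textbf{Main obstacle.} The delicate point is the behaviour at the degenerate boundary: the equation has no classical boundary conditions, so one must show that the \emph{only} freedom in the weak solution is the pair $(a(t),b(t))$ and that this freedom is exactly removed by the two conservation laws — equivalently, that the symmetrised operator with the conserved functionals quotiented out is self-adjoint with compact resolvent and a spectral gap. Establishing this rigorously (the right weighted function space, the essential self-adjointness, that $\delta_0,\delta_1$ span the stationary cone and nothing else sneaks in from the degeneracy) is the heart of the matter; fortunately this is precisely what is carried out in~\cite{ChalubSouza_sub2008}, so here I would quote it and concentrate on assembling the consequences.
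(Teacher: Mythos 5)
The first thing to note is that the paper does not actually prove Theorem~\ref{math:thm}: it is explicitly presented as a review of results established in~\cite{ChalubSouza_sub2008}, to which the reader is referred. Your proposal is therefore not competing with an in-paper argument but with a citation, and since you too ultimately defer the crux --- that the two conservation laws exactly compensate for the absence of boundary conditions at the degenerate endpoints --- to that same reference, your proposal is consistent with what the paper does. Your sketch of how the cited result is obtained is coherent and identifies the right structural ingredients: the symmetrising weight built from $\exp[-\tfrac{2}{\eps}\int(\psi^{(\A)}-\psi^{(\B)})]$, interior parabolic regularity for $r$, the atoms at $0$ and $1$ recovered by localisation, and the $2\times2$ system $a(\infty)\phi_i(0)+b(\infty)\phi_i(1)=\int p^\ini\phi_i$ that produces $\pi_0,\pi_1$. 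Three points in the sketch are thinner than they look and are precisely where the real work lies: (i) the natural speed measure for the symmetrised operator behaves like $(x(1-x))^{-1}$ times an exponential weight and is not integrable at the endpoints, so ``a weighted $L^2$ space adapted to the symmetrised form'' must be specified with care before Galerkin or semigroup machinery applies --- this boundary classification (exit boundary in Feller's sense) is the heart of~\cite{ChalubSouza_sub2008}, not a routine step; (ii) the exponential decay of $r$ must be uniform up to $x=0,1$, which interior Schauder estimates alone do not give; and (iii) the monotonicity of $a,b$ requires an explicit boundary-flux identity, e.g.\ $\dot a(t)=\lim_{x\to 0}\bigl[\tfrac{\eps}{2}\partial_x\bigl(x(1-x)r\bigr)-x(1-x)(\psi^{(\A)}-\psi^{(\B)})r\bigr]$, together with $r\ge 0$ and a sign argument at the degenerate endpoint. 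None of these invalidates your outline, but as written they are assertions rather than proofs; since you quote~\cite{ChalubSouza_sub2008} for exactly these issues, the proposal is acceptable at the same level of rigour as the paper's own treatment.
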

The precise values of $\pi_0$ and $\pi_1$ (respectively, the final extinction and the fixation probabilities)
are given by $\pi_0[p^\ini]=1-\pi_1[p^\ini]$ and $\pi_1[p^\ini]=\int_0^1p^\ini(x)\phi_2(x)\d x$.

Moreover, under this notion of solution, the adjoint equation to \eqref{generalized_kimura} is given by
\begin{align}
\nonumber
&\partial_t f =x(1-x)\partial_x^2 f(x,t) + x(1-x)\left(\psi^{(\mathbb{A})}(x)-\psi^{(\mathbb{B})}(x)\right)\partial_x f(x,t),\\ 
\nonumber
&x\in(0,1),\quad t>0,\\
&f(0,x)=f^0(x).
\label{kimura:adjoint}
\end{align}
Thus, through duality, equation \eqref{kimura:adjoint} generalizes the classical Kimura equation to the frequency selection case. In a particular case, this has been already presented  in \cite{ChalubSouza06}. More recently, using a different approach, Champagnat et al.  have also obtained a generalization of the Kimura equation to frequency dependent selection \cite{ChampagnatFerrierMeleard_TPE2006}.

\subsection{Convergence of the discrete model to equation~(\ref{generalized_kimura}) in the
limit of large populations}
\label{subsec:kimura_ss2}

In this subsection, we rigorously show that: take a general initial condition and evolve
it through the replicator-diffusion equation~(\ref{generalized_kimura}) until time $t$
and obtain the solution $p_{\mathrm{cont}}$; 
consider a discretization in intervals of size $1/N$ of the initial condition and 
evolve it through the Moran process $\frac{t}{\Delta t}=N^2t$ times and obtain
$p_N$. Then the large population limit $N\to\infty$ of $p_N$ is, for any time $t$,
pointwise, the solution of the continuous model $p_{\mathrm{cont}}$. 

This show that the continuous equation~(\ref{generalized_kimura}) is the right \textsl{limit}
of the discrete process (Moran). In a sense, our approach can be described as \textsl{inverse numerical analysis}, as we look for a continuous model that approximates the discrete process, rather than the other way round, which is usually the point of view of numerical analysis. Then, it comes as no surprise that, we are able to prove convergence by using a classical theorem in numerical analysis.
\begin{thm}
\label{thm:conv}
Let $p(x,t,N,\Delta t)$ be the solution of the finite population
dynamics (of population $N$, time step $\Delta t=1/N^2$, i.e., $\mu=\frac{1}{2}$), with initial
conditions given by $p^0(x,N,\Delta t)=p^0(x)$, $x=0,1/N,2/N,\cdots,1$, for
$p^0$ as in theorem~\ref{math:thm}. Assume also the weak-selection limit, i.e, equation~\eqref{rho:ws},
with $\nu=\frac{1}{2}$.
Let $p_{\mathrm{cont}}(x,t)$ be the solution of the continuous model given by equation~(\ref{generalized_kimura}),
with initial condition given by $p^0(x)$.
If we write $p_i^n$ for the $i$-th component of $p(x,t,N,\Delta t)$
in the $n$-th iteration, we have, for any $t^{*}>0$, that
\[
\lim_{N\to\infty}  p^{tN^2}_{xN}
=p_{\mathrm{cont}}(x,t),\quad x\in[0,1],\quad t\in[0,t^{*}].
\]
\end{thm}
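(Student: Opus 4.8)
The plan is to recognize this as a classical \emph{Lax-equivalence}–type argument: the Moran iteration is a consistent, stable finite-difference scheme for the replicator-diffusion equation~\eqref{generalized_kimura}, so convergence follows from the Lax–Richtmyer theorem~\cite{richtmyer:morton}. First I would set up the functional-analytic framework: fix $t^*>0$, let $y=1/N$, $\Delta t=1/N^2$, and view the vector $\mathbf P(t)$ of nodal values (rescaled by $N$, i.e. the $\mathcal P$-normalization) as an element of a suitable Banach space — for instance $\ell^1$ with weights matching the Radon-measure topology of Theorem~\ref{math:thm}, or more simply $C([0,1])$ sampled at the grid with sup-norm, interpolating linearly. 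The discrete evolution operator is $\mathbf M=\mathbf M_N$, and one iteration advances time by $\Delta t=N^{-2}$; the continuous solution operator is $S(t)$, which by Theorem~\ref{math:thm} maps $p^0$ to $p_{\mathrm{cont}}(\cdot,t)$ and is a well-defined (strongly continuous) family thanks to the smoothing: $r(\cdot,t)\in C^\infty$ for $t>0$ and the singular parts $a(t)\delta_0+b(t)\delta_1$ are handled by the conservation laws, which the scheme respects at the discrete level via Proposition~\ref{Akto0}.

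The core of the proof is the \textbf{consistency estimate}. Here I would reuse exactly the formal computation already carried out in Section~\ref{sec:scalings}: substituting a smooth test function $p$ (with the expansions~\eqref{smooth_x}–\eqref{smooth_t}) into one step of the discrete update~\eqref{evol_disc} and using the asymptotic forms of $c_+(x-y,\cdot)$, $c_0(x,\cdot)$, $c_-(x+y,\cdot)$ together with $\rho\approx 1+(\Delta t)^{1/2}(\psi^{(\A)}-\psi^{(\B)})$ from~\eqref{rho:ws}, one finds
\[
\frac{\mathbf M_N p - p}{\Delta t} = \frac{\eps}{2}\partial_x^2\!\big(x(1-x)p\big) - \partial_x\!\big(x(1-x)(\psi^{(\A)}-\psi^{(\B)})p\big) + R_N,
\]
where the local truncation error $R_N\to 0$ (in the chosen norm) as $N\to\infty$, uniformly on compact time intervals, provided $p$ is smooth enough — which is guaranteed for $t$ bounded away from $0$ by the regularity in Theorem~\ref{math:thm}. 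The only delicate point is consistency near $t=0$ and near the boundary $x\in\{0,1\}$: at $x=0,1$ the coefficient $x(1-x)$ degenerates, the boundary nodes are invariant under $\mathbf M$ (rows $\mathbf e_0$, $\mathbf e_N$ from Proposition~\ref{Akto0}), and the continuous problem requires no boundary condition, so the degeneracy is benign; near $t=0$ one either assumes a bit of regularity on $p^0$ or splits off the singular part and treats $r(\cdot,t)$ on $[\delta,t^*]$ while controlling the contribution of $a(t)\delta_0+b(t)\delta_1$ directly, since the scheme reproduces the discrete fixation dynamics~\eqref{gen:fixeq}--\eqref{gen:fixsoln}.

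For \textbf{stability} I would invoke Proposition~\ref{Akto0}: $\mathbf M_N$ is column-stochastic with non-negative entries, hence a contraction in $\ell^1$, so $\|\mathbf M_N^k\|_{\ell^1\to\ell^1}\le 1$ uniformly in $N$ and $k$ — this is precisely the uniform power-boundedness that Lax–Richtmyer demands, and it is why the $\ell^1$/Radon norm is the natural choice (the conservation of $\langle\mathbf P,\mathbf 1\rangle$ and $\langle\mathbf P,\mathbf F\rangle$ being the discrete shadows of the two continuous conservation laws). Finally I would assemble the pieces: writing the global error $e^n = p^n - p_{\mathrm{cont}}(\cdot, n\Delta t)$ at grid points, the standard telescoping
\[
e^{n} = \mathbf M_N^{\,n} e^0 + \sum_{j=0}^{n-1}\mathbf M_N^{\,n-1-j}\,(\Delta t)\,R_N^{(j)},
\]
combined with $e^0=0$, $\|\mathbf M_N^{\,m}\|\le 1$, $n\Delta t\le t^*$, and $\sup_j\|R_N^{(j)}\|\to 0$, gives $\|e^n\|\to 0$; evaluating at $n=tN^2$ and at node $xN$ yields the claimed pointwise limit $p^{tN^2}_{xN}\to p_{\mathrm{cont}}(x,t)$. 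I expect the main obstacle to be making the consistency estimate fully rigorous \emph{uniformly down to $t=0$ and up to the degenerate boundary}: one must either exploit the explicit decomposition $p=r+a(t)\delta_0+b(t)\delta_1$ to isolate where smoothness fails, or establish the truncation-error bound in a weak (measure/dual) norm against test functions that vanish appropriately at the endpoints, so that the $x(1-x)$ degeneracy and the boundary invariance of $\mathbf M_N$ cooperate rather than conflict.
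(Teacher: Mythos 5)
Your proposal follows the same overall strategy as the paper -- consistency from the formal expansion of Section~\ref{sec:scalings}, stability from Proposition~\ref{Akto0}, well-posedness from Theorem~\ref{math:thm}, and the Lax--Richtmyer equivalence theorem to conclude -- but the two arguments diverge precisely at the point you flag as ``the main obstacle.'' The paper does not apply the equivalence theorem to the full matrix $\mat$: it first deletes the first and last rows and columns to form $\widetilde{\mat}$, applies Lax--Richtmyer only to this interior iteration (which is consistent with \eqref{generalized_kimura} \emph{without boundary conditions} and converges to the regular part $r(x,t)$ of the decomposition $p=r+a(t)\delta_0+b(t)\delta_1$), and then treats the two boundary nodes separately by an elementary explicit computation: the scalar recursion $P(0,t+\Delta t)=P(0,t)+c_-(1/N)P(1/N,t)$ is solved, divided by $1/N$, and shown to converge weakly to $\int_0^t r(s,0)\,\d s + a_0\delta_0$ (and symmetrically at $x=1$). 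This sidesteps entirely the need for a truncation-error bound that is uniform up to the degenerate boundary or down to $t=0$ in a measure norm, which is where your version remains incomplete. On the other hand, your stability argument is arguably cleaner than the paper's: column-stochasticity gives $\|\mat^k\|_{\ell^1\to\ell^1}\le 1$ uniformly in $N$ and $k$, which is exactly the uniform power-boundedness Lax--Richtmyer requires, whereas the paper's appeal to $\sigma(\widetilde{\mat})\subset(-1,1)$ does not by itself yield an $N$-uniform bound on powers of a non-normal matrix (though the sub-stochasticity of $\widetilde{\mat}$ rescues it in the same $\ell^1$ spirit you use). If you adopt the paper's interior/boundary splitting to discharge the singular part, your argument closes; as written, the boundary treatment is identified but not carried out.
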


\begin{proof}
First, we consider the matrix $\widetilde{\mat}$ obtained from $\mat$ by
deleting the first and last rows and columns.
Then, we observe that the derivation of the thermodynamical limit
shows that the discrete iteration given by $\widetilde{\mat}$ is consistent --- in
the approximation sense~\cite{richtmyer:morton} --- with  
Equation (\ref{generalized_kimura}), without any boundary conditions,
provided that we use \eqref{rho:ws}. From the results of Proposition~\ref{Akto0}, we know that the discrete
iteration is stable, since $\sigma(\widetilde{\mat})\subset(-1,1)$. From
Theorem~\ref{math:thm}, we see that the continuous problem without boundary
conditions is well posed. In this
case, we can then invoke  the Lax-Ricthmyer equivalence theorem
\cite{richtmyer:morton} to guarantee that the discrete model converges to 
the continuum one, in the limit $\Delta t,\Delta x\to0$, with $\Delta
t=(\Delta x)^2$. More precisely, the iteration defined by
$\widetilde{\mat}$ converges to $r(x,t)$, the smooth part of $p_{\mathrm{cont}}(x,t)$;
cf.~\cite{ChalubSouza_sub2008}.

Now returning to the iteration defined by $\mat$. In order to finish
the proof, we only need to show that the first and last rows of $\mat$ converge weakly to the appropriate Dirac masses. We shall do the computation for $x=0$, the case $x=1$ being
similar.

Since $\mathbf{e}_0\in\mathbb{R}^{N+1}$ converges weakly to
$\delta_0$ as $N\to\infty$, we need only to show that it 
has the correct mass at each time $t$.

For $x=0$ the iteration defined by $\mat$ reads
\[
P(0,t+\Delta t,N,\Delta t)=P(0,t,N,\Delta t)+\frac{1}{N}P\left(\frac{1}{N},t,N,\Delta t\right)
\]
Thus, letting $t=0$ and solving the recursion, we have that
\[
P(0,m\Delta t,N,\Delta t)=P(0,0,N,\Delta t)+\frac{1}{N}\sum_{j=1}^{m-1}P\left(\frac{1}{N},j\Delta t,N,\Delta t\right).
\]
Dividing by $1/N$, and taking the limit, we find, in a weak sense, that
\[
\lim_{N\to\infty}p^{tN^2}_0\to\int_0^tr(s,0)\d s+ a_0\delta_0
\]
\end{proof}

\subsection{Strategy dominance}
\label{subsec:kimura_ss3}

For small populations it is know that the strategy that maximizes the fitness do not necessarily is
the same as the strategy that maximizes the fixation probability~\cite{PachecoSantosSouzaSkyrms}.
In this section we identify types and strategies, as defined by game-theory~\cite{HofbauerSigmund,JMS},
and show that, for large populations, any strategy that maximizes the fitness
will also maximize the fixation probability.  For a study for finite populations similar to the one developed here, but that do not extend to infinite populations, see~\cite{NowakSasaki1}.

We say that a strategy $\mathbb{A}$ \textsl{dominates} a strategy $\mathbb{B}$, $\mathbb{A}\succ\mathbb{B}$, if its fixation probability 
$\pi_1[p^\ini]\ge\pi_1^\mathrm{N}[p^\ini]$ for any initial condition $p^\ini$, where $\pi_1^{\mathrm{N}}[p^\ini]=\int_0^1p^\ini(x)x\d x$ 
is the \textsl{neutral} fixation probability. This last value is obtained imposing $\psi^{(\mathbb{A})}=\psi^{(\mathbb{B})}$ in 
equation~(\ref{second_conservation_law}) and then $\phi_2(x)=x$. It is clear that the relation $\succ$ is 
reflexive, anti-symmetric, and transitive; i.e., $\succ$ is a partial order in the set of all possible strategies.
This cannot be made a total order as there are games, like the Hawk-and-Dove game, where no pure strategy
dominates the other.

First note that $\mathbb{A}\succ\mathbb{B}$ if and only
if $\pi_1[\delta_{x^0}]\ge \pi_1^\mathrm{N}[\delta_{x^0}]$ for $x_0\in[0,1]$. Now, let us suppose that $\mathbb{A}$ is the Nash equilibrium
strategy.
Then, for any strategy $\mathbb{B}$ and for any particular composition of the population, $\psi^{(\mathbb{A})}(x)-\psi^{(\mathbb{B})}(x)\ge0$.
This implies that the function $\int_0^x\left(\psi^{(\mathbb{A})}(x')-\psi^{(\mathbb{B})}(x')\right)\d x'$ is a non-decreasing function
of $x\in[0,1]$ and consequently
\begin{align*}
&\frac{1}{x}\int_0^x\exp\left[-\frac{2}{\eps}\int_0^{x'}\left(\psi^{(\mathbb{A})}(x'')-\psi^{(\mathbb{B})}(x'')\right)\d x''\right]\d x'\\
&\quad\ge
\int_0^1\exp\left[-\frac{2}{\eps}\int_0^{x'}\left(\psi^{(\mathbb{A})}(x'')-\psi^{(\mathbb{B})}(x'')\right)\d x''\right]\d x'\ .
\end{align*}
In particular, $\phi_2(x^0)\ge x^0$, $\forall x^0\in[0,1]$ and finally
\[
 \pi_1\left[\delta_{x^0}\right]-\pi_1^\mathrm{N}\left[\delta_{x^0}\right]=\int_0^1\delta_{x^0}(x)\left(\phi_2(x)-x\right)\d x=\phi_2(x^0)-x^0\ge 0\ .
\]

\section{The Replicator Dynamics}
\label{sec:replicator}

In subsection~\ref{subsec:replicator_ss1}, we show how the Replicator Dynamics can be obtained from the discrete Moran process, when the scalings are chosen such that the \eqref{pde_replicator} is the correct limit. In subsection~\ref{subsec:replicator_ss2}, we study the fixation probabilities in the small diffusion limit.

\subsection{The replicator equation as  a large population limit}
\label{subsec:replicator_ss1}

The replicator dynamics models the continuous time evolution of the fraction
of a given type of individuals in a infinite well-mixed population
framework. In its simplest form, it reads as follows:
\begin{equation}\label{replicatorODE}
\dot X=X(1-X)\left(\psi^{(\mathbb{A})}(X)-\psi^{(\mathbb{B})}(X)\right)\ .
\end{equation}
We shall now see that, when the scalings are such that \eqref{pde_replicator} is the correct limit, we actually obtain the Replicator dynamics. We also want to point out that the Replicator dynamics can also be rigorously obtained as the small diffusion limit, or, equivalently, the limit of strong selection and intermediate times, of the replicator-diffusion equation~(\ref{generalized_kimura}). For further details, see~\cite{ChalubSouza:2008pa}.

Let 
\[
 u(x,t)=x(1-x)(\psi^{(\mathbb{A})}(x)-\psi^{(\mathbb{B})}(x))p(x,t)
\]
Then, the equation \eqref{pde_replicator} can be written as
\[
 \partial_tu=x(1-x)(\psi^{(\mathbb{A})}(x)-\psi^{(\mathbb{B})}(x))\partial_xu
\]
Its characteristics~\cite{John} are given by
\begin{align*}
 \frac{\d t}{\d s} &= 1 ,\\
\frac{\d x}{\d s} &= x(1-x)\left(\psi^{(\mathbb{A})}(x)-\psi^{(\mathbb{B})}(x)\right)\\
\frac{\d z}{\d s} &=0
\end{align*}

The projected characteristics in the $x\times t$ plane are, therefore,  given  by
\[
\frac{\d x}{\d t}=x(1-x)\left(\psi^{(\mathbb{A})}(x)-\psi^{(\mathbb{B})}(x)\right)\ ,
\]
which is exactly the equation~(\ref{replicatorODE}).

The solution, for smooth initial condition $p^0(x)$, can be written as
\begin{equation}
p(x,t)=p^0(\Phi_{-t}(x))\frac{\Phi_{-t}(x)(1-\Phi_{-t}(x)))\left(\psi^{(\A)}(\Phi_{-t}(x))-\psi^{(\B)}(\Phi_{-t}(x))\right)}{x(1-x)(\psi^{(\A)}(x)-\psi^{(\B)}(x))}.
 \label{soln:nodiffusion}
\end{equation}
where $\Phi_t(x)$ is the flow map of (\ref{replicatorODE}), and the fraction is interpreted as being one at every equilibrium point of \eqref{replicatorODE}.

Thus, equation \eqref{pde_replicator} can be seen as a Eulerian representation of $u$, while \eqref{replicatorODE} would be a Lagrangian representation. From $u$ to $p$ there is a bijective map. Thus, \eqref{pde_replicator} and \eqref{replicatorODE} are equivalent.

The (\ref{replicatorODE}) can be see as a Lagrangian representation
of $u$, once the initial probability distribution is given, while \eqref{pde_replicator} can be seen as an Eulerian representation of the same quantity.

\begin{rmk}
 A similar version of Theorem~\ref{thm:conv} is valid for the convergence of the Moran process to \eqref{pde_replicator}, in the case of smooth initial conditions. The proof is very similar and, hence, is omitted.
\end{rmk}

\subsection{Fixation probability and the small diffusion limit of equation~(\ref{generalized_kimura})}
\label{subsec:replicator_ss2}

In this subsection we obtain the small diffusion limit $\eps\to 0$ of the
fixation probability $\phi_2$ given by equation~(\ref{second_conservation_law}).
Let us consider an initial condition given by $p^\ini=\delta_{x^0}$ and then the 
fixation probability is given by $\phi_2(x^0)$. Now, we obtain the limit $\eps\to 0$ from equation
(\ref{second_conservation_law}). The only important terms when $\eps\to 0$ come from the minima
of $\int_0^x\left(\psi^{(\mathbb{A})}(x')-\psi^{\mathbb{B})}(x')\right)\d x'$, see~\cite{Hinch1991}.
If the minimum $x^*$ is unique then
\[
 \lim_{\eps\to 0}\phi_2(x)=\chi_{[0,x]}(x^*)\ ,
\]
where $\chi_{I}$ is the characteristic function of the interval $I$. In particular, if $\psi^{(\mathbb{A})}>\psi^{(\mathbb{B})}$
then $x^*=0$ and $\lim_{\eps\to 0}\phi_2(x)=1$, for $x\in(0,1)$, meaning that the type $\mathbb{A}$ will be fixed with probability 1,
if it is initially present, even in a very small quantity.
If there is a unique interior point, $x^*$, such that both pay-off are the same, type $\mathbb{A}$ will be fixed if its
initial presence is superior to $x^*$ and extinct otherwise. Note in figure~\ref{fig:duality} that 
the $\eps$-diffusion is a regularizing effect in the fixation probability. Compare with 
Figure 1 in~\cite{TraulsenNowakPacheco_PRE2006}. See also \cite{ChalubSouza:2008pa}.

\begin{figure}
\begin{center}
\includegraphics[scale=.3]{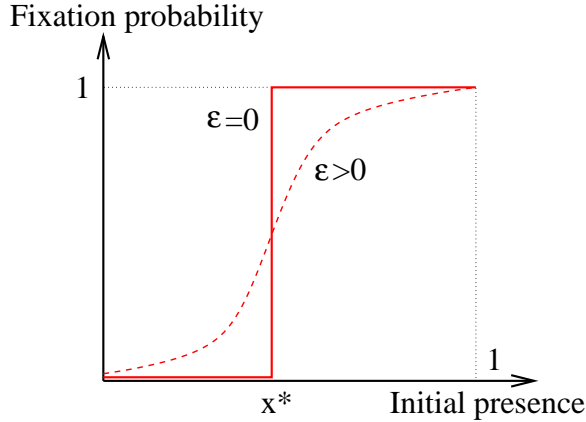}
\end{center}
\caption{Fixation probability $\phi_2(x)$ \textsl{versus} the initial presence $x$ (i.e., initial condition
is given by $p^\ini=\delta_x$ The solid line shows the (discontinuous) $\lim_{\eps\to 0}\phi_2$, when there
is a unique interior point such that $\psi^{(\mathrm{A})}(x^*)=\psi^{(\mathrm{B})}(x^*)$. This is the
case of games like the stag-hunt or the hawk-and-dove. The dashed lines indicate $\phi_2$ for positive
$\eps$. Compare with figure 1 in~\cite{TraulsenNowakPacheco_PRE2006}.\label{fig:duality}}
\end{figure}

If there are more than one minimum for 
$\int_0^x \left(\psi^{(\mathbb{A})}(x')-\psi^{\mathbb{B})}(x')\right)\d x'$, the fixation probability in the
limit $\eps\to 0$ is the average of the function $\chi_{[0,x]}$ over all minima. 
For example, if $\int_0^x\left(\psi^{(\mathbb{A})}(x')-\psi^{\mathbb{B})}(x')\right)\d x'=x(1-x)$,
then $\lim_{\eps\to 0}\phi_2(x)=\frac{1}{2}\left(\chi_{[0,x]}(0)+\chi_{[0,x]}(1)\right)=\frac{1}{2}$, 
for $x\in(0,1)$.
In the neutral case, $\lim_{\eps\to 0}\phi_2(x)=\int_0^1\chi_{[0,x]}(x')\d x'=\int_0^{x}\d x'=x$.

\section{Discussion}

The results presented here can be seen as an alternative approach both to mean-field approximations usually made through diffusion theory~\cite{Ewens} and to more recent probabilistic approaches~\cite{ChampagnatFerrierMeleard_TPE2006}. 

While the analysis presented was focused on the Moran process, the paradigm presented here can accommodate more general discrete models. The formal derivation of the continuous model---in a self-consistent manner---as a first step, should be seen as a feature, since it allows for the possibility that a  heuristically derived approximation can be made rigorous. It is also less dependent on the stochasticity of the subjacent discrete model, which can be an advantage in many situations. 

It also shows that diverse models as evolutionary game dynamics~\cite{HofbauerSigmund,PageNowak2002} and neutral genetic variability~\cite{Kimura,Ewens} can be tighten from the discrete process. As quoted in \cite{Watterson1996}, while Darwinian motto is the survival of the fittest, Kimura work on neutral selection in genetics was described by himself as survival of the luckiest. Here we show that both can be accommodated---see also \cite{ChampagnatFerrierMeleard_TPE2006,ChalubSouza06}. Moreover, we show that from a unique discrete process, we can have both effects, and recover either Darwinian, pure Stochastic, or a balanced evolutionary dynamics. 
A more detailed study of the interconnection between Darwinian and Stochastic evolution as a multiscale process is being carried by~\cite{ChalubSouza:2008pa}. This also suggests that an important issue in Mathematical Evolutionary Dynamics is to understand the interplay between the effects and time scales in the diverse biological phenomena linked to the evolutionary process.

\section*{Acknowledgments}

FACCC is partially supported by FCT/Portugal, grants  
PTDC/MAT/68615/2006 and PTDC/MAT/66426/2006. 
MOS is partially supported by FAPERJ grant \# 170.382/2006. MOS thanks the support and hospitality of FCT/UNL and Complexo Interdisciplinar/UL. Part of this work  has been done during the \textit{Special Semester on Quantitative Biology Analyzed by Mathematics}, organized by RICAM, Austrian Academy of Sciences.

\bibliography{thermo}
\bibliographystyle{plain}

\end{document}